\algnewcommand\algorithmicforeach{\textbf{for each}}
\newtheorem{theorem}{Theorem}[section]
\def\BibTeX{{\rm B\kern-.05em{\sc i\kern-.025em b}\kern-.08em
    T\kern-.1667em\lower.7ex\hbox{E}\kern-.125emX}}
\begin{document}

\title{Off-Chain Micropayment Pool for High-Throughput Bandwidth Sharing Rewards}

\author{\IEEEauthorblockN{Jieyi Long}
\IEEEauthorblockA{\textit{Theta Labs, Inc}\\
San Jose, CA, USA}
\and
\IEEEauthorblockN{Ribao Wei}
\IEEEauthorblockA{\textit{Theta Labs, Inc}\\
San Jose, CA, USA}
}

\maketitle

\begin{abstract}
This paper presents a layer-2 micropayment pool design which supports high-throughput blockchain off-chain payment, designed specifically for the use case of rewarding users who share redundant bandwidth to assist video stream delivery. We analyze the validity and effectiveness of the proposed micropayment pool. Our analysis results demonstrate that the proposed micropayment pool design is better suited for the bandwidth reward use case compared to existing off-chain payment channel solution.
\end{abstract}

\begin{IEEEkeywords}
layer-2, off-chain, micropayment, video sharing rewards
\end{IEEEkeywords}

\section{Introduction}
Internet video accounts for over three-quarters of all internet traffic today, and will increase further to 82\% by 2022, according to Cisco’s global IP traffic forecast \cite{cisco2017iptraffic}.  The same report predicts that from 2017 to 2022, global Internet video traffic will grow four-fold, live Internet video will grow 15-fold, virtual reality and augmented reality traffic will grow 12-fold.  In the U.S., video consumption among the millennials has jumped 256\% from an average of 1.6 hours per week to 5.7 hours per week according to a SSRS Media and Technology survey, and mobile devices are leading the charge in video consumption.

Content Delivery Networks (CDNs), which are systems of distributed servers that deliver content data to end users, are predicted by Cisco to carry 72\% of Internet traffic by 2022. They play an important role in distributing web content and streaming video data, by providing a backbone infrastructure for the data delivery.  A major limitation of today’s CDN networks is the so-called ``last-mile'' delivery problem, where a last-mile link between a limited number of data centers and end users presents a bottleneck in the data streaming and delivery pipeline and often leads to less optimal user experience, including link failures, noticeable delays, choppy streams, poor picture quality, and frequent rebuffering.  Another major concern is the CDN bandwidth cost, which can easily reach millions of dollars per year for popular video streaming sites.  These issues become more prominent with the coming era of high resolution digital media, for example 4K, 8K, and 360-degree VR streaming, and upcoming technologies such as light field streaming.

To overcome such bandwidth limitations, decentralized peer-to-peer data streaming and delivery platforms have been developed based on self-organizing and self-configuring mesh networks \cite{ipfs,peer5,thetanetwork,pouwelse2005bittorrent}.  These solutions typically enhance the video player with a peer-to-peer stream delivery SDK. While watching the video streams, end users may \textbf{opt-in} to share redundant or unused bandwidth resources via the SDK, and thus greatly reduces the bandwidth pressure of the CDN servers, resulting in higher stream delivery quality.  However, in practice, relatively few users will voluntarily share their resource unless properly compensated. Without an incentive mechanism in place, we may not expect many users to get on board and join the sharing ecosystem. 


The emerging blockchain platforms \cite{nakamoto2008bitcoin,wood2014ethereum,gilad2017algorand} provide a promising infrastructure for delivering the incentives through cryptocurrency transactions. For example, a viewer who received a video chunk can pay a smaller amount of cryptocurrency to the peer that delivered the chunk, in return for the bandwidth shared. Such a \textbf{pay-per-byte} scheme offers the finest granularity of reward to the peers, and can thus effectively incentivize the bandwidth sharing.  However, to make such a blockchain-based rewarding mechanism practical, the blockchain needs to support ultra-high transaction throughput. For example, for one single video stream with a moderate 100,000 concurrent viewers, assuming each viewer receives and pays for a video chunk from a peer node once every 10 seconds, then the required transaction throughput is already as high as 10,000 transactions per second, exceeding the capability of most of the leading blockchain platforms.

Off-chain/Layer-2 solutions including payment channels \cite{poon2015lightning,decker2015duplexchannels,green2017bolt,avarikioti2018algochannel,avarikioti2019asyncstatechannel,lind2019teechain,miller2019spritesandchannels,avarikioti2020cerberuschanels}, side-chains \cite{back2014pggedsidechains,gazi2019possidechains}, and roll-up \cite{berrywhitehat2018rollup,buterin2018rollup,long2018rolluperc20} provide promising possibilities for blockchain throughput scaling. By moving the majority of transactions off-chain, the system is able to support frequent payments for small, individual video data chunks transmitted to or received from one or more peers. However, as will be discussed in more details later in the paper, these existing solutions may not be the best fit for delivering data sharing rewards. In particular, existing payment channels can only handle a pair of users. However, when watching a video stream, typically a viewer is exchanging video chunks with and need to pay to multiple peers. Setting up a payment channel with each individual peer is highly inefficient. Instead, we need an off-chain mechanism that can handle one-to-many payments.

\subsubsection{Our Contributions.} In this paper, we address this problem with a novel micropayment pool design which is capable of handling one-to-many off-chain payments. Our contributions can be summarized as follows:

\begin{itemize}
  \item We analyze the requirements for a micropayment system for data sharing rewards. 
  \item We propose a micropayment pool design which can handle off-chain one-to-many payments, well suited for delivering the data sharing rewards with pay-per-byte granularity.
  \item We analyze the validity and effectiveness of the micropayment pool, and show that it is more efficient for data sharing rewards than payment channels. 
\end{itemize}

\subsubsection{Related Works.}

Off-chain/Layer-2 payment has been a hot research topic in recent years as a scalability solution for blockchain platforms \cite{poon2015lightning,decker2015duplexchannels,green2017bolt,avarikioti2018algochannel,avarikioti2019asyncstatechannel,lind2019teechain,miller2019spritesandchannels,avarikioti2020cerberuschanels,pass2015micropayment}. A payment channel (or more generally state channel) is a mechanism that allows users to exchange multiple transactions without committing to the blockchain \cite{poon2015lightning,decker2015duplexchannels}.  Such “off-line” transactions can be settled on the blockchain later, thus incurring minimal transaction confirmation latencies.  A payment channel typically start with an opening funding transaction to lock the initial fund onto the blockchain \cite{poon2015lightning}. Then, subsequent commitment transactions are exchanged off-line between the two users to update the initial state.  A final settlement transaction unilaterally or bilaterally closes the payment channel when submitted to the blockchain for confirmation.  As state updates or commitment transactions can be exchanged between the users off-line, as soon as they are created and signed, many more transactions can be exchanged in between the funding and the settlement transactions.  Cutting the number of on-chain transactions down to two also drastically reduces the costs associated with very frequent micropayments. Earlier development revolves around Bitcoin, and evolved into networks of payment channel called the Lightning Network \cite{poon2015lightning}. Over the years, more designs of off-chain payment channels built on top on other blockchain infrastructes, including the Ethereum smart contract, are proposed \cite{raiden2017}. These payment channels typically only handle back-and-forth payments among two parties. The Lightning Network and the alike can potentially handling multi-party payment, but has rather complex setup and requires pre-existing payment routes among the users participated in the value exchange \cite{poon2015lightning,miller2019spritesandchannels,raiden2017}.

Besides payment/state channel, other layer-2 solutions like side-chain \cite{back2014pggedsidechains,gazi2019possidechains} and roll-up \cite{berrywhitehat2018rollup,buterin2018rollup,long2018rolluperc20} are also widely discussed among the blockchain research community. It is worth noting that our off-chain micropayment pool can work on top of these solutions to further extend the throughput of the blockchain.

\section{Motivation}

In this section, we will first review the peer-to-peer video bandwidth sharing problem, which provides the context for our discussion. Then, we will dive into the bandwidth sharing reward problem and discuss the shortcomings of the existing payment channel solutions.

\subsection{Peer-to-Peer Video Bandwidth Sharing} \label{sec:peer_to_peer_bandwidth_sharing}

\begin{figure}[htbp]
\centerline{\includegraphics[width=0.5\textwidth]{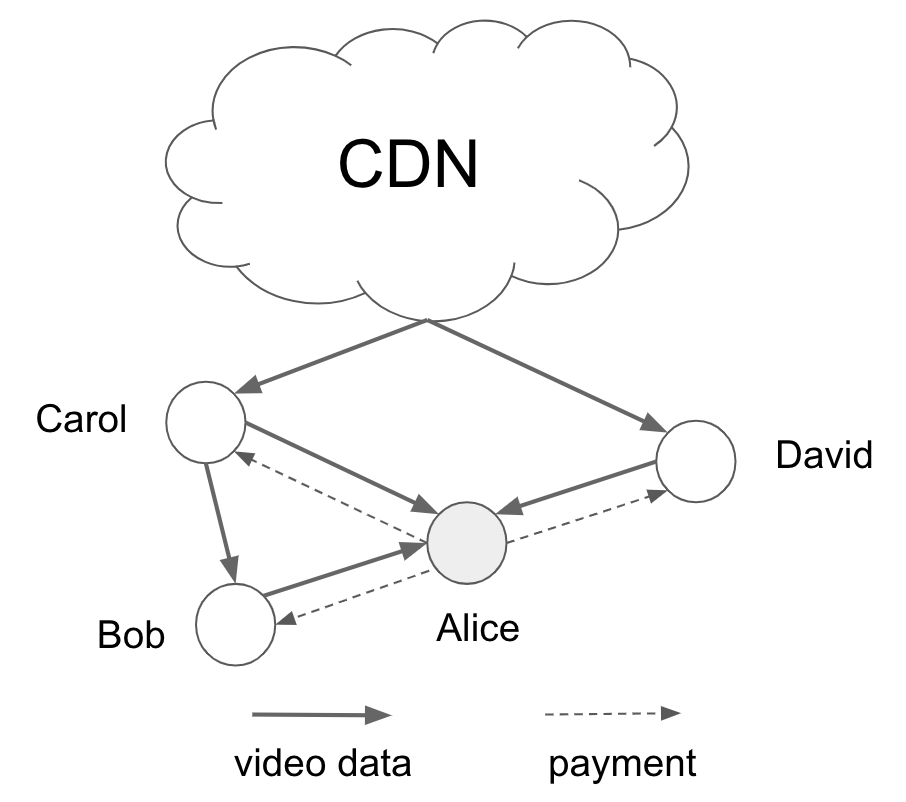}}
\caption{Viewer Alice pulling stream data from multiple peesrs. The solid arrows represent the video stream data flows, and the dashed arrows represent the micropayments between the peers.}
\label{fig:peer_to_peer_stream_sharing}
\end{figure}

Popular videos or live streams, such as NBA final games, could attract millions of people watching concurrently.  Given the sheer number of concurrent users, for any viewer, it is highly likely that there are other viewers nearby enjoying the same video stream. Since they are watching the same content, instead of loading the stream from the CDN, some of them can pull from other peers close by. With sufficient number of concurrent viewers, such a peer-to-peer sharing mechanism can offload significant amount of CDN data. To achieve the maximum offload, a viewer typically pulls the stream from multiple peers, as shown in Figure \ref{fig:peer_to_peer_stream_sharing}.

In the example, Viewer Alice has multiple peers (Bob, Carol, and David, etc) in her proximity. Popular video streaming protocols like HLS/MPEG-DASH typically divides the video file into small video \textbf{data chunks}. A viewer client would download these data chunks in serial order. As the video stream plays, whenever Alice needs to download the next data chunk, the peer-to-peer sharing SDK asks around and download the chunk from peers if one of the peer nodes already has the chunk available locally. The CDN is used as a fallback source when none of the peers has the required chunk. It is reported that such a peer network can potentially result in 50\% to 90\% CDN bandwidth saving \cite{peer5,thetanetwork} if all the viewers participate in stream data sharing.

\subsection{Off-Chain Payment Channels for Bandwidth Sharing Rewards} \label{sec:payment_channel_for_bandwidth_sharing_rewards}

In the example shown in Figure \ref{fig:peer_to_peer_stream_sharing}, viewer Alice can send a small amount of crytocurrency with an on-chain transaction for each video data trunk she obtain from peer nodes. However, as mentioned in the introduction, this could quickly overwhelm the capacity of the blockchain.

An immediate optimization is to adopt the payment channel technique. Indeed, a payment channel can significantly improve the scalability since it reduces the number of on-chain transactions. With a payment channel, a viewer can pay for many video segments with just one on-chain settlement transaction. Nonetheless, there are two issues with the payment channel solution.

First, \textbf{slow node switching} are unconducive to streaming data segments from many cachers to a single viewer.  As discussed above, an on-chain transaction is needed to establish a payment channel between any two parties, which might take at least a couple seconds to be confirmed on the blockchain.  Typically, to make peer-to-peer sharing effective, a viewer is typically connected to 10+ peers to increase the likelihood that at least one of the connected peers possesses the desired video chunk.  Each time the viewer needs to pull a stream from a new peer, it needs to make an on-chain transaction first to open the payment channel, which is time consuming.  In addition, before the on-chain transaction is confirmed, the viewer node cannot pull streams from the peer node. This could halt the video stream, leading to degraded user experience.

Second, the payment channel solution still requires \textbf{a relatively large number of on-chain transactions}.  Each payment channel requires an upfront token lockup transaction, and a settlement transaction to close the channel. With 10+ peers, this implies each viewer is associated with at least 20 on-chain transactions, which still hinders the scalability of the system.

Thus, there is an unsolved need to design an off-chain payment system suited for decentralized video streaming, with support for \textbf{one-to-many} off-chain transactions.  Such a system can effectively address the two issues mentioned above. First, with the one-to-many payment capability, node switching can be done instantly since no intermediate on-chain transaction is required when connecting to a new peer. Second, the one-to-many capability also potentially reduces the number of on-chain transactions needed, and thus increasing the scalability of the system. Of course, the system should be \textbf{trustless}, and able to detect, prevent, and penalize fraudulent activities such as double spending. 

\section{Off-Chain Micropayment Pool}

In this section, we present a ``resource-oriented off-chain micropayment pool'' design that has the desired properties listed in the previous section.

\begin{figure*}[t]
    \centering
    \includegraphics[width=0.9\textwidth]{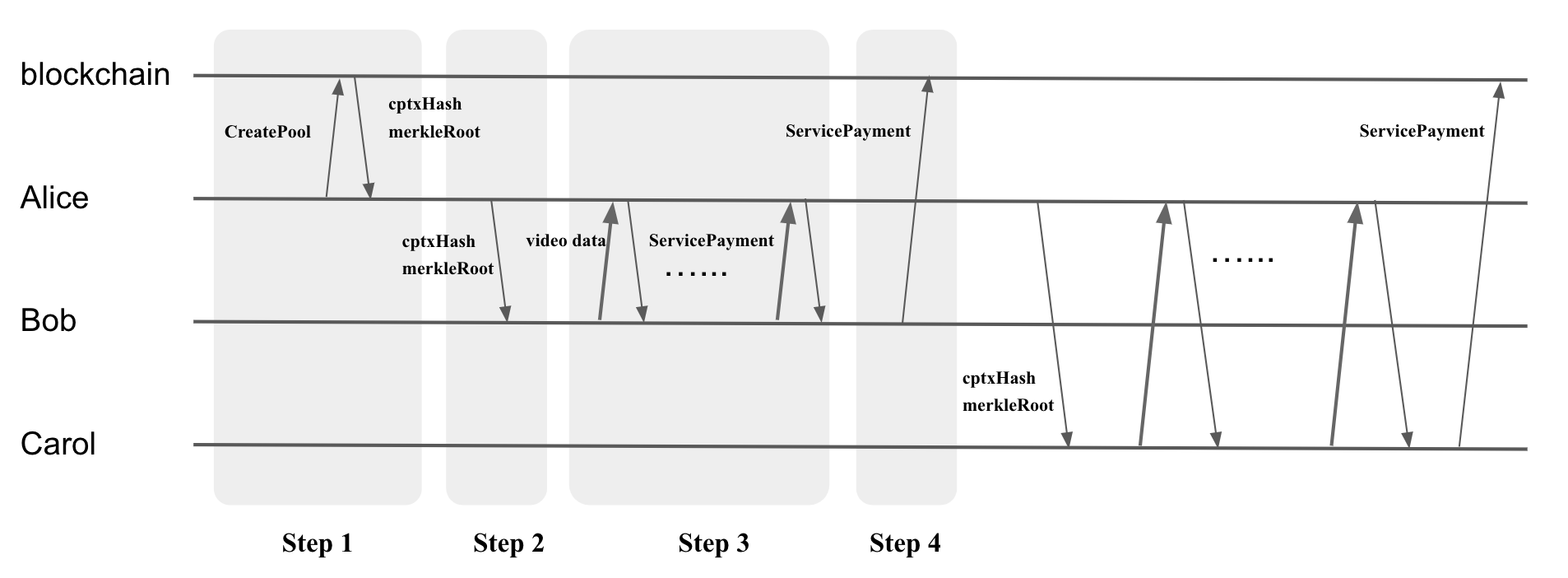}
    \caption{The working mechanism of the resource-oriented off-chain micropayment pool. Here Alice first downloads video chunks from Bob, and pay to Bob via the micropayment pool with the four-step interaction. Then, Alice switches and downloads video chunks from Carol without an intermediate on-chain transaction.}
    \label{fig:resource_oriented_off_chain_micropayment_pool_design}
\end{figure*}

\subsubsection{Resource-Oriented Off-Chain Micropayment Pool.}

Figure \ref{fig:resource_oriented_off_chain_micropayment_pool_design} is a diagram illustrating the working mechanism of our proposed resource-oriented off-chain micropayment pool, showing a viewer Alice making off-chain payments to cachers Bob and Carol for video chunks.  In the following, as we discuss the micropayment pool design, we will keep the discussion at the protocol abstraction level. However, we note that the proposed micropayment pool can be implemented on any Turing-complete smart contract platforms such as Ethereum \cite{wood2014ethereum}. The interaction between Alice and any of her peers (e.g. Bob) can be devided into four steps, as illustrated below.

\bigbreak

\noindent \textbf{Step 1. Micropayment pool creation:} Alice publishes an on-chain transaction to create a micropayment pool with a time-lock $duration$:

\begin{equation} \label{eq:step_1_create_pool}
\textbf{CreatePool}(resourceId, deposit, collateral, duration)
\end{equation}

To create the pool, Alice may specify a \textbf{“Resource ID”} $resourceId$ that uniquely represents the digital content she intends to retrieve. It may refer to a video file, or a live stream.  The $deposit$ amount may be at least the total value of the resource to be retrieved. For instance, if the resource is a video file which is worth 100 tokens, then the deposit has to be at least 100 tokens. Along with the $deposit$, Alice is required to put down a $collateral$ which can discourage Alice from double spending. If a double spending attempt from Alice is detected, the $collateral$ will be slashed.  We will discuss the double spending detection and slashing later in this section. There we will also show that if $collateral > deposit$,  the net return of a double spend is \textit{always negative}, and hence any rational user will have no incentive to double spend.  The $duration$ acts as a time-lock similar to that of a standard payment channel. Any withdrawal (i.e. Step 4. On-chain settlement) from the micropayment pool has to be before the time-lock expires.  The blockchain returns Alice the Merkle proof of the CreatePool transaction after it has been committed to the blockchain, as well as $cptxHash$, the transaction hash of the CreatePool transaction.

\bigbreak

\noindent \textbf{Step 2. Initial handshake between peers:} Whenever Alice wants to retrieve the specified resource from a peer (Bob, Carol, or David, etc.), she sends the Merkle proof and $cptxHash$, the transaction hash of the on-chain CreatePool transaction to that peer. The peer verifies the Merkle proof to ensure that the pool has sufficient deposit and collateral for the requested resource, and both parties can proceed to the next steps. Since Alice already has the required Merkle proofs in hand, and the peer can verify the proofs instantly with the corresponding block headers, this handshaking can be done instantly.

\bigbreak

\noindent \textbf{Step 3. Off-chain payments:} Alice signs ServicePayment transactions and sends them to the peers off-chain in exchange for parts of the specified resource (e.g., a piece of the video file, a live stream segment, etc.). A ServicePayment transaction may contain the following data:

\begin{equation} \label{eq:step_3_serice_payment}
\textbf{ServicePayment}(target, amount, cptxHash, tgtSeq, \sigma)
\end{equation}

\noindent Here $target$ is the address of the peer that Alice retrieves the resource from, and $amount$ is the amount of token payment Alice intends to send. Similar to payment channels, the $amount$ should increment each time to reflect the total value of the resource Alice already received since the handshake (Step 2). In the video streaming context, say each video chunk is worth 5 token. Then the first $ServicePayment$ Alice sent to Bob should set $amount$ to 5, and the second $ServicePayment$ should set $amount$ to 10, and so on. $tgtSeq$ is to prevent a replay attack.  It is similar to the ``nonce'' parameter in an Ethereum transaction. If a target publishes a ServicePayment transaction to the blockchain (see Step 4), its $tgtSeq$ needs to increment by one. Alice also attachs her signature $\sigma$ to the ServicePayment transaction, where $\sigma$ is computed by

\begin{equation}\label{eq:step_3_serice_payment_signature}
\sigma = \textbf{Sign}(sk_{Alice}, target || amount || cptxHash || tgtSeq))
\end{equation}

\noindent As the peer (Bob, Carlo, or David, etc.) receives the off-chain ServicePayment transactions, it needs to verify validity of the transactions, including the $amount$, $cptxHash$, $tgtSeq$, and the signature $\sigma$. Upon validation, the peer can send Alice the requested data chunks of the resource specified by the CreatePool transaction. Note that the off-chain ServicePayment transactions can be sent directly between two peers.  Hence there is no scalability bottleneck for this step. On the other hand, to ensure that the data returned by the peers are parts of the requested resource, Alice can compare the hashes of the parts against the expected part hashes, which are small strings downloadable form the CDN.

\bigbreak

\noindent \textbf{Step 4. On-chain settlement:} Any peer (Bob, Carol, or David, etc.) that receives the ServicePayment transactions from Alice can sign and publish the transactions to the blockchain anytime before the timelock expires to withdraw the tokens.  ServicePayment transactions that are published on-chain may also be called ``on-chain settlement'' transactions. The protocol also requires the peer to pay for the transaction fee for the on-chain settlement.

\bigbreak

Finally, we note that if Bob cannot provide the video chunks Alice requested, Alice may switch to another peer (e.g. Carol). As illustrated by Figure \ref{fig:resource_oriented_off_chain_micropayment_pool_design}, when the switch happens, no on-chain transaction is needed. This means \textbf{the viewer can switch to any peer node at any time without making an on-chain transaction} that could potentially block or delay the video stream delivery. We also note that for the purpose of illustration, in Figure \ref{fig:resource_oriented_off_chain_micropayment_pool_design} Carol starts to interact with Alice only after Bob submits the ServicePayment transaction to the blockchain. In practice though, multiple peers can interact with Alice in parallel, and the micropayment pool still works correctly (Theorem \ref{thm:double_spending_resistance}).


\subsubsection{Double Spending Detection and Penalty.} 

A malicious actor may attempt to make a double spending if it is profitable. Thus, the micropayment pool needs to equipped with a proper mechanism to penalize double spending attempts.  To detecting double spending, the on-chain smart contract that implements the micropayment pool should check every on-chain transaction. The \textbf{double spending detection rule} is simple: if a remaining deposit in the micropayment pool cannot cover the next consolidated payment transaction signed by both Alice and another peer, the smart contract will consider that Alice has conducted a double spend.  The smart contract should also \textbf{slash the collateral} of the micropayment pool as the penalty to Alice. We will prove in Theorem \ref{thm:double_spending_resistance} that this  ensures that Alice is worse off if she double spends.

\section{Analysis}

\subsubsection{Assumption.} Recall that in our context, a \textbf{resource} refers to a piece of digital content. We can thus make the following self-evident assumption before proving the correctness of the proposed resource-oriented micropayment pool:  Possessing multiple copies of a resource does not increase the value the resource. For example, if Alice already downloads a video file from Bob, then the same video downloaded from Carol does not provide any extra value to Alice. 

\subsubsection{Correctness Analysis.} In this section, we will prove the correctness of the proposed micropayment pool under the above assumption.

\begin{theorem}\label{thm:double_spending_resistance} The creator of the resource-oriented micropayment pool cannot profit from double spending. 
\end{theorem}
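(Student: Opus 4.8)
The plan is to set up a simple accounting model for Alice's net payoff and to show it is strictly negative whenever she double spends, so that an honest run (net payoff zero) always dominates. Write $D$ for the $deposit$, $C$ for the $collateral$ (with $C > D$, as the design prescribes for double-spend resistance), and $V$ for the value of the full resource, so that $V \le D$ by the pool-creation rule requiring the deposit to cover the resource. Let $v$ denote the total value of the content Alice actually receives across all peers, and let $p$ denote the total number of tokens her pool pays out on-chain. I would define her net payoff as the value she gains minus the tokens she irrecoverably loses, that is, (value received) $+$ (returned deposit and collateral) $-$ (deposit and collateral she locked up), and the goal is to show this quantity is negative under double spending.

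First I would pin down the two facts that drive the bound. The gain side is controlled by the no-duplicate-value assumption: since extra copies of a resource add no value, the total value Alice extracts satisfies $v \le V \le D$ no matter how many peers she pays or how she splits her requests among them. The loss side is controlled by the detection rule: double spending means that at some point during the sequential on-chain settlement a validly signed consolidated payment exceeds the remaining deposit, which happens exactly when the peers' cumulative claims exceed $D$; by the slashing rule this forfeits the entire collateral $C$.

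Next I would combine these. In an honest run Alice pays exactly for the distinct content she receives, so $p = v$, her collateral is returned, and her net payoff is $v - p = 0$; this is the baseline to beat. When she double spends, the collateral is slashed, so her net payoff is at most $v - p' - C$, where $p' \ge 0$ is the amount actually drawn from the deposit. Using $v \le V \le D < C$ gives
\[
v - p' - C \;\le\; v - C \;\le\; D - C \;<\; 0,
\]
so double spending is strictly worse than the honest baseline. I would also dispatch the case in which Alice over-signs but the overspend never reaches the chain (so the rule does not fire): here the collateral is returned, but she has paid multiple peers for overlapping content, forcing $p \ge v$ and hence net payoff $v - p \le 0$ as well.

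The main obstacle I expect is tightening the link between the informal detection rule and the accounting, namely proving the rule is both sound and complete: that an honest Alice is never slashed, and that whenever her signed obligations exceed $D$ the rule necessarily fires before the time-lock expires, given that at least the over-credited peers settle. I would also need to argue carefully that the no-duplicate assumption bounds $v$ even under adversarial chunk selection, and to check the residual edge cases---peers declining to settle, or only part of the deposit being consumed---showing that each can only lower Alice's payoff and never turn it positive.
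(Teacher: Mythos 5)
Your proposal is correct and rests on the same two pillars as the paper's own proof: the no-duplicate-value assumption bounds Alice's extractable value by the deposit, and detection slashes the full collateral, so $collateral > deposit$ forces a negative net return. The structure of the argument differs, however. The paper analyzes one concrete worst-case scenario---Alice colludes with a malicious peer Edward, who settles for the full deposit first and refunds her, so she effectively recovers the deposit, keeps the resource, and loses the collateral, giving $net_{Alice} = deposit - collateral < 0$---and then asserts, without further argument, that ``other cases'' are similar. You instead set up a uniform accounting inequality valid for every strategy: whenever detection fires, net $\le v - p' - C \le D - C < 0$, and when over-signing never reaches the chain, the collateral is returned but no surplus accrues. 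Your version subsumes the paper's collusion scenario (it is exactly the case $p' = 0$, $v = D$) and closes the hand-wave over the remaining cases, so it is more complete than what the paper actually writes down. Two caveats. First, your claim $p \ge v$ in the no-detection case can fail if peers decline to settle before the time-lock (Alice then keeps those tokens and her ledger can end positive); this does not contradict the theorem, because that surplus comes from unclaimed payments rather than from double spending---the same surplus is available to an honest Alice---but the clean fix is to compare payoffs while holding peer settlement behavior fixed, rather than against a flat zero baseline. Second, as you note yourself, the whole argument leans on soundness and completeness of the on-chain detection rule (honest Alice is never slashed; an overspend that reaches the chain always triggers slashing), which the paper also assumes implicitly and never proves.
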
 

\begin{proof} We assume Alice is malicious, while her peers Bob, Carol, and David maybe honest. Even worse, Alice could colludes with another malicious peer Edward. Alice exchanges partially signed transactions with Bob, Carol, and David for the specified resource. Based on our assumption above, Alice gains no extra value for the duplicated resource, the maximum value she gets from Bob, Carol, and David is at most the deposit amount. 

As Alice colludes with Edward, she can send Edward the full deposit amount. She then asks Edward to commit the settlement transaction before anyone else and return her the deposit later. In other words, Alice gets the resource which is worth at most the deposit amount for free, before the double spending is detected. Later when Bob, Carol, or David commit the settlement transaction, the double spending is detected, and the full collateral amount will be slashed. Hence, the net return for Alice is

\begin{equation}\label{eq:double_spend_net_profit}
net_{Alice} = deposit - collateral
\end{equation}

Therefore, for this scenario, as long as $collateral > deposit$, Alice’s net return is negative. Hence, if Alice is rational, she would not have any incentive to double spend.  Similarly, analysis for other cases show that Alice’s net return is always negative if she conducts a double spend.
\end{proof}

\begin{theorem}\label{thm:bounded_loss} A malicious peer might not return the requested data to the micropayment pool creator after receiving the ServicePayment. However, the loss of the pool creator can be bounded by $k \cdot A$, where $k$ is the number of peers, and $A$ is the amount of tokens paid for each data chunk.
\end{theorem}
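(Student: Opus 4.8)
The plan is to bound the loss one peer at a time and then aggregate over all $k$ peers. First I would fix attention on a single peer, say Bob, and determine the maximum amount Alice can lose to him alone. The crucial structural fact, furnished by Step 3 of the protocol, is twofold: the $amount$ field of each ServicePayment is cumulative and grows by exactly $A$ per chunk, and Alice can verify each delivered chunk by comparing its hash against the expected part hash downloaded from the CDN. Together these let Alice adopt the natural pay-per-chunk discipline in which she signs and sends the ServicePayment carrying $amount = i\cdot A$ (in exchange for the $i$-th chunk) only after she has received and verified chunks $1$ through $i-1$ from Bob.

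Under this discipline I would argue that Alice's outstanding exposure to Bob never exceeds a single chunk. Concretely, suppose Bob has delivered $i-1$ verified chunks and Alice has just committed the payment $amount = i\cdot A$. If Bob returns a valid $i$-th chunk, Alice has paid exactly for value received; if Bob withholds it or returns data whose hash does not match, Alice detects the failure, never increments the amount again for this $target$, and simply switches to another peer. In both cases the amount Alice has irrevocably committed to Bob exceeds the value of the chunks she actually obtained by at most $A$, so the loss attributable to Bob is bounded by $A$. Note this bound holds regardless of how many chunks Bob served honestly beforehand, since each earlier exchange was fair.

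Finally I would sum over peers. Because each of the $k$ peers is paid through an independent ServicePayment stream, distinguished by its own $target$ address and carrying its own independently incrementing $amount$, the per-peer losses are additive and cannot be recombined to exceed $A$ for any given peer. The worst case is therefore that all $k$ peers are malicious and each withholds exactly one paid-for chunk, giving the stated bound of $k\cdot A$.

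The main obstacle is making the reduction to the per-peer exposure of $A$ precise, since it rests on two modeling commitments that must be stated explicitly: that Alice pays in the finest pay-per-chunk increments rather than prepaying several chunks at once, and that the hash-verification mechanism of Step 3 lets her detect a withheld or corrupted chunk \emph{before} she signs the next payment. Once these are fixed, the single-peer bound and the summation over the $k$ streams are routine.
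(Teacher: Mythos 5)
Your proposal is correct and follows essentially the same argument as the paper: hash-based verification of each chunk (via the $resourceId$ and CDN part hashes), a per-peer exposure of at most one prepaid chunk $A$ before the cheating peer is detected and blacklisted, and summation over the $k$ peers to obtain $k \cdot A$. Your version merely makes explicit the pay-per-chunk discipline and detection-before-next-payment assumptions that the paper's terser proof leaves implicit, which is a faithful (and arguably cleaner) rendering of the same idea.
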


\begin{proof} First, the pool creator Alice has a way to verify if a data chunk (i.e. a part of the resource) returned by a peer is the requested chunk. Since the CreatePool transaction specifies the $resourceId$, all the users should have a common understanding of which resource is expected to be shared. And as described in ``Step 3. Off-chain payments'', with the $resourceId$, they can download the hashes of the data chucks  from the CDN and verify the data chunks received from the peers.

A malicious peer might not return the requested data to Alice after receiving the ServicePayment. If such behavior is detected, Alice can \textit{blacklist} that peer turn to another peer to get the resource. Thus, the lost of Alice is bounded by $k \cdot A$. Moreover, if the resource is divided into small enough chunks, in theory $A$ can be made infinitesimally small. Then, the total lost $k \cdot A$ will also be infinitesimally small. In a more practical case where some the peers is honest, Alice should be able to identify those peers and get the requested resource from them.
\end{proof}

\begin{theorem} The proposed resource-oriented micropayment pool can incentivize the users to minimize the number of on-chain transactions.
\end{theorem}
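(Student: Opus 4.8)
The plan is to treat each participant as a rational agent whose utility equals the value of the tokens it controls minus the on-chain transaction fees it must pay, and then to show that minimizing the number of on-chain transactions is individually optimal for every party. First I would enumerate the on-chain transactions the protocol can ever produce: exactly one \textbf{CreatePool} transaction initiated by the pool creator (Step 1), and a collection of on-chain settlement transactions (Step 4), each of which is published and paid for by some peer. Since these are the only transactions touching the chain, the global count is $T = (\text{CreatePool transactions}) + (\text{settlement transactions})$, and I would argue the incentive structure drives each summand to its minimum.

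For the peer side, the key observation is that the $amount$ field of a ServicePayment is cumulative: the latest ServicePayment a peer holds already certifies the entire value owed to that peer since the handshake. Hence a single on-chain settlement carrying the final $amount$ lets the peer withdraw everything it is owed, and because the protocol makes the peer itself pay the settlement fee, any additional settlement strictly lowers its net return (an extra fee with no extra tokens). I would formalize this as a dominance argument: for every peer, the strategy ``settle exactly once, with the maximal received $amount$, before the time-lock expires'' weakly dominates every strategy that settles more often, and strictly dominates it whenever fees are positive. This pins the number of settlement transactions at one per paid peer.

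For the creator side, I would invoke the one-to-many property of the design together with Theorem~\ref{thm:double_spending_resistance}. A single pool, funded with a $deposit$ covering the whole resource, can pay an arbitrary number of peers for one $resourceId$, so opening a second pool for the same resource yields no additional service while costing an extra CreatePool fee and locking up additional $collateral$. Because Theorem~\ref{thm:double_spending_resistance} already removes any profit from fragmenting payments across pools to double spend, the creator has no countervailing reason to split her payments, and her cost-minimizing choice is the single pool. Combining the two sides shows that $T$ attains its minimum $1 + (\text{number of peers paid})$ under rational play.

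The hard part will be ruling out the strategic or defensive reasons a peer might rationally settle more than once --- for instance, settling early out of fear that Alice will exhaust the deposit via double spending, or racing other peers to the chain. I expect to discharge these concerns by appealing to Theorem~\ref{thm:double_spending_resistance} (a rational creator never double spends, so no defensive early-and-repeated settlement is warranted in equilibrium) and to Theorem~\ref{thm:bounded_loss} (the residual loss from a withholding peer is bounded and can be driven arbitrarily small), so that in equilibrium the single-settlement strategy remains dominant and the minimum on-chain count is realized.
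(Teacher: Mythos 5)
Your proposal is correct and its core observation coincides with the paper's, but you take a substantially more thorough route. The paper's entire proof is two sentences, and it covers the peer side only: Step~4 makes the settling peer pay the on-chain fee, therefore peers will ``publish on-chain settlements only when necessary.'' You reproduce that incentive, but add three things the paper leaves implicit or untouched: (i) the reason a single settlement per peer suffices at all --- the cumulative $amount$ field means the latest ServicePayment already certifies the peer's whole balance, so additional settlements yield no extra tokens while costing extra fees (this is the step that turns the paper's ``only when necessary'' into ``exactly once''); (ii) the creator's side, where the one-to-many property makes a second CreatePool for the same $resourceId$ pure cost, which the paper never discusses even though its statement is about ``users'' generally; and (iii) the strategic objection that peers might rationally settle early and often as a defense against deposit exhaustion, which you discharge via Theorem~\ref{thm:double_spending_resistance}. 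The paper's version buys brevity and needs no rationality assumption about the creator; yours buys the explicit minimum of $n+1$ on-chain transactions for $n$ paid peers (exactly the count the paper later asserts, without proof, in its Results section) and an equilibrium-style justification of it. The one soft spot is (iii): ruling out defensive over-settlement via Theorem~\ref{thm:double_spending_resistance} requires peers to \emph{believe} the creator is rational (common knowledge of rationality), not merely that she is rational; that assumption should be stated explicitly, though it is no stronger than the informal game-theoretic standard the paper itself applies in Theorems~\ref{thm:double_spending_resistance} and~\ref{thm:bounded_loss}.
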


\begin{proof}
As specified in ``Step 4. On-chain settlement'', the protocol requires the peer (i.e. the target address of the ServicePayment transaction) to pay for the transaction fee for the on-chain settlement. To pay less transaction fees, a peer would have the incentive to publish on-chain settlements only when necessary, which is beneficial to the scalability of the network.
\end{proof}

\section{Results and Discussions}

We have implemented the proposed resource-oriented micropayment pool in a testnet environment and compare its effectiveness with the traditional payment channels. As also discussed in Section \ref{sec:payment_channel_for_bandwidth_sharing_rewards}, we compare two aspects that are critical for the bandwidth sharing reward use case, i.e. the node switching time and the number of on-chain transactions. The comparison shows that our proposed resource-oriented micropayment pool has multiple advantages for bandwidth sharing rewards compared to existing payment channel solution.

\bigbreak

\noindent \textbf{Node switching time:} As discussed in the previous sections, in the peer-to-peer stream sharing environment, due to high peer churn rate, a viewer needs to connect to new peer frequently. In our testnet environment, such node switching happens once every 60 to 100 seconds on average. With the proposed micropayment pool, after the initial CreatePool transaction, when the viewer pulls the video stream data from a new peer, no on-chain transaction is required. Instead, the viewer can just send the Merkle proofs of the CreatePool transaction to the peer, and the peer can then validate it with the corresponding block header. In our implementation, the entire process can be conducted within 100 milliseconds. Such fast turn-around time means the viewer can pull data from the new peer instantly, and thereby improves the stream delivery quality. In contrast, using traditional payment channel, the stream data exchange with the new peer cannot start until the token locking transaction is confirmed. On Ethereum, this could take up to 90 seconds (6 block confirmation). On newer blockchains like Algorand, transaction confirmation can still takes several to tens of seconds, making the peer-to-peer stream sharing much less efficient.

\bigbreak

\noindent \textbf{Number of on-chain transactions:} For a viewer, our one-to-many micropayment pool only requires one CreatePool transaction upfront. Then for each peer the viewr connects to, only one on-chain settlement transaction is needed to transfer the tokens. Thus, for a viewer with $n$ peer, a total of $n+1$ on-chain transactions are needed. In comparison, using the traditional payment channel, each peer is associated with an opening and a closing on-chain transactions. Thus, a total of $2n$ on-chain transactions are needed. As $n$ is usually larger than 10 in a typical streaming session, given the same blockchain transaction throughput, our proposed micropayment pool can increase the maximum number of concurrent viewers by almost 100\%. Thus, our resource-oriented micropayment pool can significantly improve the scalability of the blockchain for bandwidth sharing rewards.

\bigbreak

%
%

\bibliographystyle{unsrt}
\bibliography{ms}

\end{document}